\documentclass[sigconf]{acmart}

\usepackage{subfig}
\usepackage{booktabs}
\usepackage{multirow}
\usepackage{amsmath}
\usepackage{amsfonts}
\usepackage{algorithm}
\usepackage{algorithmic}
\usepackage{lipsum}  
\usepackage{bm}
\usepackage{soul}
\usepackage{xcolor}
\usepackage[normalem]{ulem}
\usepackage{xifthen}
\usepackage[switch]{lineno}
\usepackage{float}
\useunder{\uline}{\ul}{}

\DeclareMathOperator*{\argmin}{arg\,min}
\newcommand{\ith}{\ensuremath{i^{\text{th}}}}
\newcommand{\jth}{\ensuremath{j^{\text{th}}}}
\newcommand{\eigv}[1][]{\ensuremath{\ifthenelse{\isempty{#1}}{\bm{q}} {\bm{q}^{(#1)}}}}
\newcommand{\param}[1][]{\ensuremath{\ifthenelse{\isempty{#1}} {\bm{w}} {\bm{w}_{#1}}}}
\newcommand{\sysname}{\textsc{Leporid}}
\newcommand{\modname}{\textsc{DLR$^2$}}

\AtBeginDocument{%
  \providecommand\BibTeX{{%
    \normalfont B\kern-0.5em{\scshape i\kern-0.25em b}\kern-0.8em\TeX}}}
    
\usepackage{lipsum}

\setcopyright{none}

\begin{document}
\fancyhead{}
\widowpenalty=0
\clubpenalty=0

\title{Initialization Matters: Regularizing Manifold-informed Initialization for Neural Recommendation Systems}
\author{Yinan Zhang$^{1,3}$, Boyang Li$^{1,*}$, Yong Liu$^{2,3}$, Hao Wang$^4$, Chunyan Miao$^{1,2,3,*}$}\thanks{$^*$ Corresponding Author}
\affiliation{
$^1${School of Computer Science and Engineering, Nanyang Technological University}\\
$^2${Joint NTU-UBC Research Centre of Excellence in Active Living for the Elderly (LILY)}\\
$^3${Alibaba-NTU Singapore Joint Research Institute}
$^4${Alibaba Group}
}
\email{
{yinan002, boyang.li, stephenliu, ascymiao}@ntu.edu.sg, 
cashenry@126.com
}



\begin{abstract}
  Proper initialization is crucial to the optimization and the generalization of neural networks. However, most existing neural recommendation systems initialize the user and item embeddings randomly. 
  In this work, we propose a new initialization scheme for user and item embeddings called Laplacian Eigenmaps with Popularity-based Regularization for Isolated Data (\sysname). \sysname{} endows the embeddings with information regarding multi-scale neighborhood structures on the data manifold and performs adaptive regularization to compensate for high embedding variance on the tail of the data distribution. Exploiting matrix sparsity, \sysname{} embeddings can be computed efficiently. 
  We evaluate \sysname{} in a wide range of neural recommendation models. In contrast to the recent surprising finding that the simple $K$-nearest-neighbor (KNN) method often outperforms neural recommendation systems \cite{recsys19}, we show that existing neural systems initialized with \sysname{} often perform on par or better than KNN. To maximize the effects of the initialization, we propose the Dual-Loss Residual Recommendation (\modname) network, which, when initialized with \sysname{}, substantially outperforms both traditional and state-of-the-art neural recommender systems. 
\end{abstract}

\begin{CCSXML}
<ccs2012>
   <concept>
       <concept_id>10010147.10010257.10010258.10010260.10010271</concept_id>
       <concept_desc>Computing methodologies~Dimensionality reduction and manifold learning</concept_desc>
       <concept_significance>500</concept_significance>
       </concept>
   <concept>
       <concept_id>10002951.10003317.10003347.10003350</concept_id>
       <concept_desc>Information systems~Recommender systems</concept_desc>
       <concept_significance>500</concept_significance>
       </concept>
   <concept>
       <concept_id>10010147.10010257.10010293.10010294</concept_id>
       <concept_desc>Computing methodologies~Neural networks</concept_desc>
       <concept_significance>300</concept_significance>
       </concept>
 </ccs2012>
\end{CCSXML}

\ccsdesc[500]{Computing methodologies~Dimensionality reduction and manifold learning}
\ccsdesc[500]{Information systems~Recommender systems}
\ccsdesc[300]{Computing methodologies~Neural networks}

\keywords{network initialization, recommender systems, manifold learning}


\maketitle

\section{Introduction}

Deep neural networks have been widely applied in recommendation systems. However, the recent finding of \cite{recsys19} indicates that the simplistic $K$-nearest-neighbor (KNN) method can outperform many sophisticated neural networks, casting doubt on the effectiveness of deep learning for recommendation. 
The power of KNN lies in its use of the neighborhood structures on the data manifold. As neural networks tend to converge close to their initialization (Theorems 1 \& 2 of \cite{allenzhu2018convergence}; Collorary 4.1 of \cite{du2019gradient}), a natural thought is to initialize neural recommendation networks with neighborhood information.


Laplacian Eigenmaps (LE) \cite{initEigenmaps} is a well-known technique for creating low-dimensional embeddings that capture multi-scale neighborhood structures on graphs. In Figure \ref{fig:multi-scale-neighborhood}, we visualize how the first four dimensions of LE embeddings divide the graph into clusters of increasingly finer granularity. For example, the positive and negative values in the second dimension partition the graph into two (blue vs. red), whereas the third dimension creates three partitions. This suggests that LE embeddings can serve as an effective initialization scheme for neural recommendation systems. Compared to attribute-based node feature pretraining for graphic neural networks \cite{hu2020strategies,gpt_gnn2020,zhang2020graphbert}, the LE initialization captures the intrinsic geometry of the data manifold, does not rely on external features such as user profiles or product specifications, and is fast to compute.


\begin{figure*}[t]
\centering
\includegraphics[width=\textwidth]{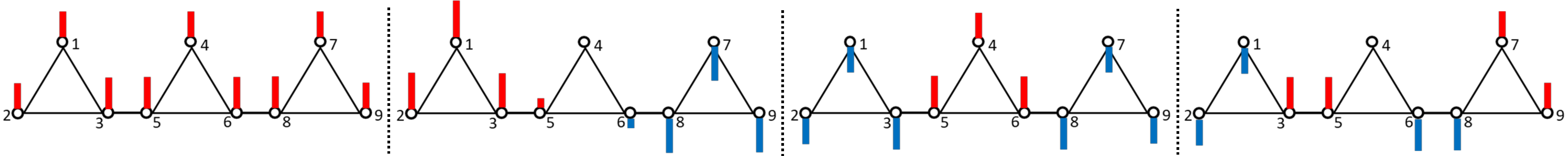}
\caption{Node embeddings generated by Laplacian Eigenmaps (LE). We show the first four embedding dimensions sequentially (left to right) as bars overlaid on every node. Positive/negative values are denoted by red/blue, and the bar lengths indicate the absolute values. We observe greater fluctuation between neighboring nodes in later dimensions.}
\label{fig:multi-scale-neighborhood}
\end{figure*}

Nevertheless, direct application of LE ignores an important characteristic of the KNN graph, that the graph edges are estimated from different amounts of data and hence have unequal variance. If we only observe short interaction history for some users and items, it is difficult to obtain good estimates of their positions (i.e. neighbors) on the data manifold and of their embeddings. As a result, direct application of LE may lead to poor performance. 


To address the problem, we propose Laplacian Eigenmaps with Popularity-based Regularization for Isolated Data (\sysname). \sysname{} regularizes the embeddings based on the observed frequencies of users and items, serving as variance reduction for estimated embeddings. \sysname{} embeddings can be efficiently solved by using sparse matrix eigendecomposition with known convergence guarantees. Experimental results show that \sysname{} outperforms existing initialization schemes on a wide range of neural recommender systems, including collaborative filtering, graph neural networks, and sequential recommendation networks. Although randomly initialized neural recommendation methods compare unfavorably with KNN techniques, with \sysname{} initializations, they often perform on par or better than KNN methods. 

To fully realize the strength of \sysname{}, we propose Dual-Loss Residual Recommendation (\modname{}), a simple residual network with two complementary losses. On three real-world datasets, \modname{} with \sysname{} initialization outperforms all baselines by large margins. An ablation study reveals that the use of residual connections contributes significantly to the performance of \modname{}. Under certain conditions, the output of residual networks contains all information in the input \cite{behrmann2019invertible}. Thus, the proposed architecture may utilize manifold information throughout the network, improving performance. 

\sloppypar{With this paper, we make the following contributions:
\begin{itemize}
    \item Consistent with \cite{recsys19}, we find that randomly initialized neural recommenders perform worse than KNN methods. However, with \sysname{} initialization, neural recommenders outperform well-tuned KNN methods, underscoring the importance of initialization for neural recommender systems. 
    \item To capture neighborhood structures on the data manifold, we propose to initialize user and item representations with Laplacian Eigenmaps (LE) embeddings. To the best of our knowledge, this is the first work applying LE in initializing neural recommendation systems.  
    \item Further, we identify the issue of high variance in LE embeddings for KNN graphs and propose Popularity-based Regularization for Isolated Data (\sysname) to reduce the variance for users and items on the tail of the data distribution. The resulted \sysname{} embeddings outperform LE embeddings on all neural recommendation systems that we tested.
    \item We propose a new network architecture, \modname{}, which employs residual connections and two complimentary losses to effectively utilize  the superior initialization. On three real-world datasets, \modname{} initialized by \sysname{} surpasses all baselines including traditional and proper initialized deep neural networks by large margins. 
\end{itemize}}

\section{Related Work}
\subsection{Initializing Neural Networks}
Proper initialization is crucial for the optimization and the generalization of deep neural networks. Poor initialization was a major reason that early neural networks did not realize the full potential of first-order optimization \cite{Sutskever2013:important_init}. 
Xavier \cite{glorot2010:init} and Kaiming initializations \cite{He_2015_ICCV} are schemes commonly used in convolutional networks. 
Further highlighting the importance of initialization, recent theoretical developments (Theorems 1 \& 2 of \cite{allenzhu2018convergence}; Collorary 4.1 of \cite{du2019gradient}) show the point of convergence is close to the initialization. Large pretrained models such as BERT \cite{devlin2018bert} can also be seen as supplying good initialization for transfer learning. 

While most neural recommendation systems make use of random initialization, some recent works  \cite{ncf,bpr1,bpr2} initialize embeddings from BPR \cite{bpr}. \cite{Zhu2020} embeds user and item representations using node2vec \cite{node2vec} on the heterogeneous graph containing both explicit feedback (e.g., ratings) and side information (e.g., user profiles). \cite{gpt_gnn2020,zhang2020graphbert} explore pretraining for graph networks, but they critically rely on side information like visual features of movies.

In comparison, LE and \sysname{} initialization schemes represent the intrinsic geometry of the data manifold and do not rely on any side information. To the best of our knowledge, this is the first work employing intrinsic geometry based initialization for neural recommendation systems. 
Experiments show that the proposed initialization schemes outperform a wide range of existing ones such as pretrained embeddings from BPR \cite{bpr} and Graph-Bert \cite{zhang2020graphbert} without side information. In addition, LE and \sysname{} initializations can be efficiently obtained with established convergence guarantees (see Section \ref{sec:time-complexity}), whereas large pretrained neural networks like \cite{zhang2020graphbert} require significantly more computation and careful tuning of hyperparameters for convergence.

\subsection{Recommendation Systems}
From the aspect of research task, sequential recommendations \cite{seq-wu18,seq-xu19,caser,seq-wang18,seq-yuan19,seq-wu17,seq-bal16,Steffen19} are recently achieving more research attentions, as the interactions between users and items are essentially sequentially dependent \cite{seq-wang19}. Sequential recommendations explicitly model  users' sequential behaviors.



From the aspect of recommendation techniques, existing methods are mainly based on collaborative filtering \cite{bpr,itemknn,userknn} and deep learning \cite{ncf,bpr1,jin2020efficient}. 
Graph Neural Networks (GNN) have recently gained popularity in recommendation systems \cite{graph-rex18,graph-qu20,graph-chang20}. 
GNN-based recommendation methods \cite{wu2020graph} typically build the graph with users and items as nodes and interactions as edges, and apply aggregation methods on the graph. Most works \cite{graph-chen20,graph-wang20,graph-qin20,graph-song19} make use of spatial graph convolution, which focuses on local message passing. A few works have investigated the use of spectral convolution on the graph \cite{spectral_Zheng_2018,spectral-Liu2019}.


The initialization schemes we propose are orthogonal to the above works on network architecture. Empirical results demonstrate that the LE and \sysname{} initialization schemes boost the accuracy of a diverse set of recommendation systems, including GNNs that already employ graph information.

\section{Proposed Initialization Schemes}

\begin{table}[t]
    \renewcommand{\arraystretch}{1.2}
\centering
\caption{Frequently used notations}
\begin{tabular}{lp{18.5em}}
\toprule
$\mathcal{D}$ & Dimensionality of user and item embeddings \\
$\mathcal{N}$ & Number of nodes in the graph\\
$G$ & Graph used in Laplacian eigenmaps\\
$W$ & Weighted adjacency matrix \\
$W_{i,j}$ & Weight of the edge between nodes $i$ and $j$ \\
$d_i$ & Degree of node $i$\\
$d_{\text{max}}$ & Maximum degree of the graph\\
$D$ & Diagonal degree matrix \\
$L$ & Unnormalized graph Laplacian \\
$L_{\text{sym}}$ & Normalized symmetric graph Laplacian \\
$L_{\text{reg}}$ & Unnormalized graph Laplacian with popularity-based regularization \\
$L_{\text{regsym}}$ & Normalized symmetric graph Laplacian with popularity-based regularization \\
$\lambda_i$ & $\ith$ eigenvalue of the graph Laplacian\\
$\bm{q}^{(i)}$ & $\ith$ eigenvector of the graph Laplacian\\
\bottomrule
\end{tabular}
\end{table}

\subsection{Laplacian Eigenmaps}
Given an undirected graph $G=\langle V, E \rangle$ with non-negative edge weights, Laplacian Eigenmaps creates $\mathcal{D}$-dimensional embeddings for the $\mathcal{N}$ graph nodes. We use $W \in \mathbb{R}^{\mathcal{N} \times \mathcal{N}}$ to denote the adjacent matrix, whose entry $W_{i,j}$ is the weight of the edge between nodes $i$ and $j$. $W_{i,j} \ge 0$ and $W_{i,j} = W_{j,i}$. If nodes $i$ and $j$ are not connected, $W_{i,j} = W_{j,i} = 0$. The degree of node $i$, $d_i = \sum_j W_{i,j}$, is the sum of the weights of edges adjacent to $i$. 
The degree matrix $D \in \mathbb{R}^{\mathcal{N} \times \mathcal{N}}$ has $d_i$ on its diagonal, $D_{i,i}=d_i$, and zero everywhere else. The unnormalized graph Laplacian is defined as
\begin{equation}
    L = D-W.
\end{equation}
$L$ is positive semi-definite (PSD) since for all $\bm{q}\in \mathbb{R}^{\mathcal{N}}$,
\begin{equation}
    \bm{q}^\top L \bm{q} = \frac{1}{2}  \sum_{i}\sum_{j} W_{i,j} (q_i - q_j)^2 \ge 0,
\end{equation}
where $q_i$ refers to the $i^{th}$ component of $\bm{q}$.
Alternatively, we may use the normalized symmetric graph Laplacian $L_{\text{sym}}$ \cite{chung97, andrew02},
\begin{equation}
\label{eq:l_sym}
    L_{\text{sym}} = D^{-\frac{1}{2}}(D-W)D^{-\frac{1}{2}}.
\end{equation}
To find node embeddings, we compute the eigendecomposition $L = Q\Lambda Q^\top$,
where the column of matrix $Q$ are the eigenvectors $\eigv[1], \ldots, \eigv[\mathcal{N}]$ and matrix $\Lambda$ has the eigenvalues $0 = \lambda_1 \le \lambda_2 \le \ldots \le \lambda_{\mathcal{N}}$ on the diagonal. We select the eigenvectors corresponding to the smallest $\mathcal{D}$ eigenvalues as matrix $\Tilde{Q} \in \mathbb{R}^{\mathcal{N}\times \mathcal{D}}$. The rows of $\Tilde{Q}$ become the embeddings of the graph nodes. Typically we set $\mathcal{D} \ll \mathcal{N}$. 

We can think of $q^{(k)}_{j}$, the $\jth$ component of $\eigv[k]$, as assigning a value to the graph node $j$. When $\lambda_k$ is small, for adjacent nodes $i$ and $j$, the difference between $q^{(k)}_{i}$ and $q^{(k)}_{j}$ is small. As $\lambda_k$ increases, 
the $k^{\text{th}}$ embedding dimension vary more quickly across the graph (see Figure \ref{fig:multi-scale-neighborhood}). The resulted embeddings represent multi-scale neighborhood information. 

With $L$ being PSD, LE can be understood as solving a sequence of minimization problems. 
The eigenvectors $\eigv[1], \ldots, \eigv[\mathcal{D}]$ are solutions of the constrained minimization 
\begin{equation}
    \eigv[k] = \argmin_{\eigv} \frac{1}{2} \sum_{i}\sum_{j} W_{i,j} (q_{i} - q_{j})^2 \\
\label{eq:le-optimization}
\end{equation}
\begin{equation}
\text{s.t. \, } \eigv^\top \eigv = 1 \text{\, and \,} \bm{q}^\top \eigv[l] = 0, \forall l < k.
\label{eq:le-optimization-constraints}
\end{equation}
We emphasize that, different from the optimization of non-convex neural networks, the LE optimization can be solved efficiently using eigendecomposition with known convergence guarantees (see Section \ref{sec:time-complexity}).

\subsection{Popularity-based Regularization}

We generate user embeddings from the KNN graph where nodes represent users and two users are connected if one is among the K nearest neighbors of the other. Item embeddings are generated similarly. An important observation is that we do not always have sufficient data to estimate the graph edges well. For nodes on the tail of the data distributions (users or items in nearly cold-start situations), it is highly likely that their edge sets are incomplete. 

We empirically characterize the phenomenon by generating random Barabasi-Albert graphs \cite{BA-graph2002} and adding one extra edge to nodes with different degrees. We then compute the change in embedding norms caused by the addition, which is averaged over 50 random graphs. For every node on a graph, we randomly pick one edge from all possible new adjacent edges and add it to the graph. This procedure is repeated 30 times for every node. Figure \ref{fig:embedding-change} plots the change in norm versus the degree of the node before the addition. Graph nodes with fewer adjacent edges experience more drastic changes in their LE embeddings. This indicates that the LE embeddings for relatively isolated nodes are unreliable because a single missing edge can cause large changes.

\begin{figure}[t]
\centering
\includegraphics[width=0.7\columnwidth]{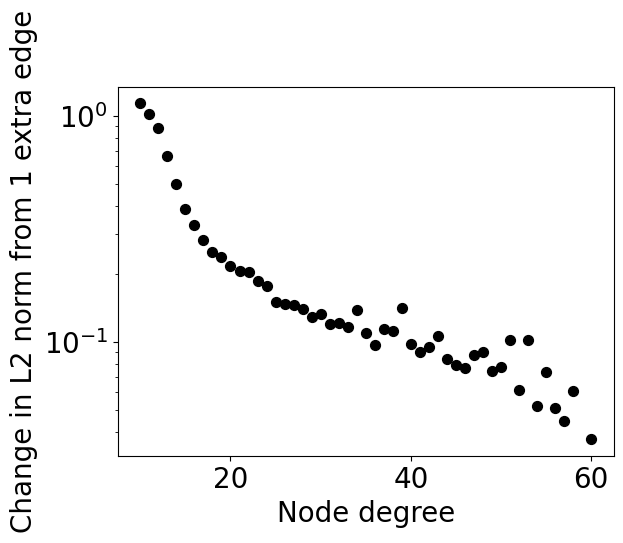}
\caption{Graph nodes with fewer adjacent edges experience more drastic changes in their LE embeddings when one adjacent edge is added to the node. The y-axis shows the change in the $\ell_2$ norm of the embedding of the node. The x-axis shows the node degree before the addition. 
}
\label{fig:embedding-change}
\end{figure}

As a remedy, we propose \sysname{}, which selectively regularizes the embeddings of graph node $i$ when $d_i$ is small. 
Keeping the constraints in Eq. \ref{eq:le-optimization-constraints} unchanged, we replace Eq. \ref{eq:le-optimization} with
\begin{equation}
    \eigv[k] = \argmin_{\eigv} \frac{1}{2} \sum_{i} \sum_{j} W_{i,j} (q_{i} - q_{j})^2 + \alpha \sum_{i} (d_{\text{max}} - d_i) q_i^2 ,
\label{eq:lepor-optimization}
\end{equation}
where $d_{\text{max}} = \max_i d_i$ is the maximum degree of all nodes and $\alpha$ is a regularization coefficient. The $(d_{\text{max}} - d_i) q_i^2$ term applies stronger regularization to $q_i$ as $d_i$ decreases. In matrix form,
\begin{equation}
    \eigv[k] = \argmin_{\eigv} \eigv^\top L_{\text{reg}} \eigv 
    = \argmin_{\eigv} \eigv^\top ((1-\alpha)D-W+\alpha d_{\text{max}} I) \eigv .
\label{eq:lepor-optimization-matrix}
\end{equation}
As $d_{\text{max}} \ge d_i$, $L_{\text{reg}}$ remains PSD, so we can solve the sequential optimization using eigendecomposition as before. For the normalized Laplacian, with $D_{\text{reg}} = (1-\alpha)D+\alpha d_{\text{max}} I$, we perform eigendecomposition on  $L_{\text{regsym}}$,
\begin{equation}
\label{eq:l_regsym}
L_{\text{regsym}} = D_{\text{reg}}^{-\frac{1}{2}}(D_{\text{reg}}-W)D_{\text{reg}}^{-\frac{1}{2}}.
\end{equation}

\begin{figure*}[th]
  \centering
  \includegraphics[width=1.8\columnwidth]{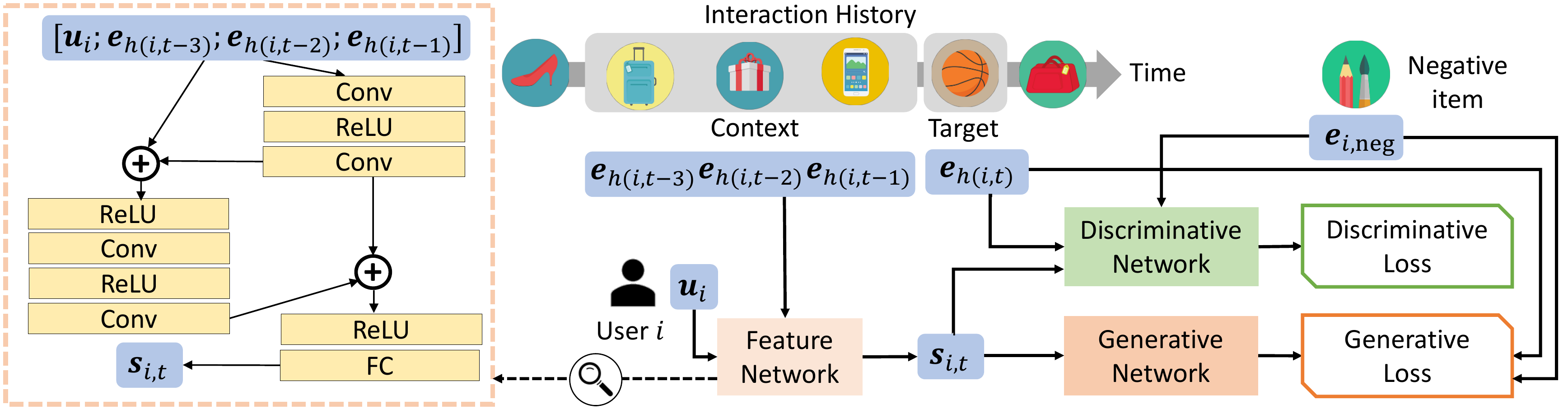} 
  \caption{The network architecture, including the Feature Network, which feeds into two branches: the Discriminative Network and the Generative Network. The Feature Network utilizes the residual structure in the box on the left.}
  \label{model}
\end{figure*}

To see the rationale behind the proposed technique, note that minimizing Eq. \ref{eq:le-optimization} will set the components $q_{i}$ and $q_{j}$ as close as possible. When the orthogonality constraints require some $q_{i}$ and $q_{j}$ to be different, the optimization makes them unequal only where the non-negative weight $W_{i,j}$ is small. When a node $i$ has small degree $d_i$, most of its $W_{i,j}$ are zero or small, so large changes in $q_{i}$ 
cause only negligible changes to the minimization objective. The additional regularization term penalizes large values for $q_{i}$, hence achieving regularization effects. 

The proposed regularization can also be understood from the probabilistic perspective \cite{Lawrence2012}. The eigenvector component $q_{i}$, which is associated with node $i$, may be seen as a random variable that we estimate from observations $W$. A small $d_i$ indicates that we have limited data for estimating $q_{i}$, resulting in a high variance in the estimates. Had we observed a few additional interactions for such nodes, their embeddings could change drastically. The proposed regularization involving $q_{i}^2$ can be understood as a Gaussian prior for $q_{i}$ with adaptive variance. 

\subsection{KNN Graph Construction}
To apply LE and \sysname, we construct two separate similarity graphs respectively for users and items. We will describe the construction of the user graph; item graph is constructed analogously.

Intuitively, two users are similar if they prefer the same items. Thus, we define the similarity between two users as the Jaccard index, or the intersection over union of the sets of items they interacted with \cite{knnjeccard}. Letting $\mathcal{I}_{i}$ denote the set of items that user $i$ interacted with, the similarity between user $i$ and user $j$ is
\begin{equation}
    \text{sim}(i,j) = \frac{|\mathcal{I}_{i} \cap \mathcal{I}_{j}|}{|\mathcal{I}_{i} \cup \mathcal{I}_{j}|},
\end{equation}
where $|\cdot|$ denotes the set cardinality. 
In the constructed graph $G$, we set $W_{i, j} = \text{sim}(i,j)$ if user $j$ is among the $K$ nearest neighbors of user $i$, \emph{or} if user $i$ is among the $K$ nearest neighbors of user $j$. Otherwise, $W_{i, j} = 0$.

\subsection{Computational Complexity}
\label{sec:time-complexity}
Eigenvalue decomposition that solves for LE and \sysname{} embeddings has a high time complexity of $O(\mathcal{N}^3)$ for a graph with $\mathcal{N}$ nodes, which does not scale well to large datasets. 
Fortunately, for the $K$ nearest neighbor graph, the graph Laplacian is a sparse matrix with the sparsity larger than $1-\frac{2K^2}{\mathcal{N}^2}$. Utilizing this property, the Implicit Restarted Lanczos Method (IRLM) \cite{irlm,hou2018fast} computes the first $\mathcal{D}$ eigenvectors and eigenvalues with time complexity $O(K^2 \mathcal{D} \kappa)$, where $\kappa$ is the conditional number of $L$. Note that this complexity is independent of $\mathcal{N}$. 
In our experiment, it took only 172 seconds to compute 64 eigenvectors for the user KNN graph of the \textit{Anime} dataset, which has a $47143 \times 47143$ matrix and data sparsity of $96.85$\%. 
In comparison, it took 1888 seconds for the simple BPR model \cite{bpr} to converge.
The experiments were completed using two Intel Xeon 4114 CPUs. 



\section{Proposed Recommendation Model}
\label{sec:recmodel}
To maximize the benefits of the proposed initialization schemes, we design Dual-Loss Residual Recommendation (\modname{}).
Our preliminary experiments indicate that the immediate interaction history has the most predictive power for the next item that the user will interact with. For this reason, we select the last few items the user interacted with and the user embedding as the input features to the model. In the recommender system literature, this is known as sequential recommendation \cite{seq-wang19}. It is worth noting that LE and \sysname{} embeddings provide performance enhancement in a wide range of recommender systems, even though their effects are the most pronounced in sequential recommendation.  



We denote the embeddings for user $i$ as $\bm{u}_i \in \mathbb{R}^{\mathcal{D}}$ and the embedding for item $j$ as $\bm{e}_j \in \mathbb{R}^{\mathcal{D}}$. After initialization, the embeddings are trained together with other network parameters. 
$h(i, t)$ denotes the item that user $i$ interacted with at time $t$. In order to represent the short-term interest of user $i$ at time $t$, we select the most recent $l$ items in the interaction history as $\bm{r}_{i, t} =  [\bm{e}_{h(i,t-l-1)}; \ldots; \bm{e}_{h(i,t-1)}]$. 

The network architecture is displayed in Figure~\ref{model}. The Feature Network takes $\bm{r}_{i, t}$ and $\bm{u}_i$ as input and outputs a common feature representation $\bm{s}_{i, t} \in \mathbb{R}^{\mathcal{D}}$. We employ residual connections \cite{resblock} in the Feature Network. Letting $x$ denote the input to a residual block, we feed $x$ into two convolution layers whose output is $F(x)$. The output of the block is $H(x) = F(x) + x$. We employ two residual blocks in the Feature Network, followed by a fully connected layer.

Residual connections may create invertible outputs \cite{behrmann2019invertible}, which preserve all the information in the input embeddings. We hypothesize that this may allow the network to fully utilize the power of initialization and put the hypothesis to test in Section \ref{sec:difarch}.




We feed the acquired feature $\bm{s}_{i, t}$ into both the Discriminative Network and the Generative Network, which exploit complementary aspects of the supervision.
The Discriminative Network $\mathcal{S}$ evaluates the preference of the user $i$ towards item $j$ at time $t$. We feed the user's current state $\bm{s}_{i, t}$ and the item embedding $\bm{e}_j$ into $\mathcal{S}$. The output $\mathcal{S}(\bm{s}_{i, t}, \bm{e}_j) \in \mathbb{R}$ represents the distance between users' preference and item $j$. More specifically, lower output value indicates higher similarity and better 
preference. 
We apply the square-square loss that suppresses the output distance of the correct item $\bm{e}_{h(i,t)}$ toward 0 and lifts the distance of a randomly sampled negative item $\bm{e}_{i, neg}$ above a margin $m_S$,
\begin{align}
\label{loss:discriminative}
\begin{split}
    \mathcal{L}_{S} & = \mathbb{E}\biggl[ \mathcal{S}(\bm{s}_{i,t}, \bm{e}_{h(i, t)})^2 + \left(\max( m_S - \mathcal{S}(\bm{s}_{i,t}, \bm{e}_{i, neg}), 0)\right)^2 \biggr],
\end{split}
\end{align}
where $\mathbb{E}$ is the expectation over $i, t$ and negative samples. 

The Generative Network $\mathcal{G}$ directly predicts the embedding of items that the user prefers. Its output is $\mathcal{G}(\bm{s}_{i, t}) \in \mathbb{R}^{\mathcal{D}}$. We employ Euclidean distance to estimate the similarity between the network output $\mathcal{G}(\bm{s}_{i, t}) \in \mathbb{R}^{\mathcal{D}}$ and the evaluated item. 
The goal is to make the distance of the correct item $\bm{e}_{h(i,t)}$ smaller than the distance of the randomly sampled negative item $\bm{e}_{i, neg}$.
We adopt the following hinge loss with margin $m_{G}$,
\begin{align}
\label{loss:generative}
\begin{split}
     \mathcal{L}_{G} & = \mathbb{E}\biggl[ \max(\|\mathcal{G}(\bm{s}_{i, t}) - \bm{e}_{h(i, t)}\|_2 - \|\mathcal{G}(\bm{s}_{i, t}) - \bm{e}_{i, neg}\|_2 + m_{G}, 0)\biggr].
\end{split}
\end{align}
The final loss of the network is the sum of $\mathcal{L}_{G}$ and $\mathcal{L}_{S}$, which describe different aspects of supervisory signals. 



\section{Experiments}

\subsection{Datasets}
\begin{table}[t]
\caption{Dataset Statistics. Tail users and items are defined as the 25\% of users and items with the least interaction records. Their data shares are the proportions of their interactions out of all interaction records. }
\label{exp:datasets}
\resizebox{\columnwidth}{!}{
\begin{tabular}{@{}ccccccc@{}}
\toprule
Datasets    & \#Users & \#Items & \shortstack[t]{\# Inter-\\ actions} & \shortstack[t]{Data\\ Density}& \shortstack[t]{Tail Users'\\ Data Share}   & \shortstack[t]{Tail Items'\\ Data Share} \\ \midrule
\textit{ML-1M} & 6,040    & 3,650    & 1,000,127 & 4.54\%& 4.51\%  & 1.08\%\\
\textit{Steam} & 33,699   & 6,253    & 1,470,329 & 0.70\%& 11.67\%  & 1.83\%\\
\textit{Anime}        & 47,143   & 6,535    & 6,143,751 & 1.99\%& 5.73\% & 0.93\%\\ \bottomrule
\end{tabular}}
\end{table}

The proposed algorithm is evaluated on three real-world datasets from different domains: \textit{MovieLens-1M} (\textit{ML-1M}) \cite{movielens}, \textit{Steam}\footnote{https://www.kaggle.com/trolukovich/steam-games-complete-dataset}, and \textit{Anime}\footnote{https://www.kaggle.com/CooperUnion/anime-recommendations-database}.
Table \ref{exp:datasets} shows the dataset statistics. 
Among the three datasets, \textit{Steam} is the most sparse with data sparsity over 99.3\%.
We split the datasets according to the time of interaction; the first 70\% interaction history of each user is used as the training set, the next 10\% as the validation set, and the rest as the testing set. 
For all three datasets, we retain only users and items with at least 20 interactions. When creating the candidate items for each user during testing, we use only items that the user has not interacted with in the training set or the validation set. 

\subsection{Initialization Methods}
We select a comprehensive list of initialization methods, including traditional initialization and state-of-the-art pre-training methods. 
\begin{itemize}
    \item Random embeddings drawn i.i.d. from a Gaussian distribution with zero mean and standard deviation 0.01.
    \item SVD, or singular value decomposition of the user-item interaction matrix $T \in \mathbb{R}^{N\times M}$. Let $T = U\Sigma V^\top$ and $\Tilde{\Sigma}$ denote the diagonal matrix with the largest $\mathcal{D}$ singular values. The user and item embeddings are $U\Tilde{\Sigma}^{\frac{1}{2}}$ and $\Tilde{\Sigma}^{\frac{1}{2}}V^\top$ respectively.
    \item BPR \cite{bpr}, 
    which ranks items based on the inner product of user and item embeddings.
    We take the trained embeddings as initialization for other models. 
    \item NetMF \cite{netmf}, which generalizes a variety of skip-gram models of graph embeddings such as DeepWalk \cite{deepwalk}, LINE \cite{line}, PTE \cite{pte} and node2vec \cite{node2vec}. NetMF empirically outperforms DeepWalk and LINE in conventional network mining tasks.
    \item node2vec \cite{node2vec}. Following \cite{Zhu2020}, we trained node2vec on the heterogeneous graph with users and items as nodes and user-item interactions as edges. However, due to its high time complexity and high memory consumption \cite{pimentel2018fast}, we are unable to compute node2vec embeddings on \textit{Anime}.
    \item Graph-Bert \cite{zhang2020graphbert}, the user and item embeddings from Graph-Bert pretrained with self-supervised objectives. For fair comparisons with methods that do not use side information, we generate node attributes randomly.
\end{itemize}
For LE and \sysname{} embeddings, we employ the symmetrically normalized graph Laplacians ($L_{\text{sym}}$ in Eq. \ref{eq:l_sym} and $L_{\text{regsym}}$ in Eq. \ref{eq:l_regsym}).

\begin{table*}[t]
\small
\centering
\caption{Performances of different initialization for different neural recommendation methods. The best initialization results of each model are in bold faces and the second best results are underlined. ``Relative Imp.'' refers to the relative improvement of \sysname{} over the best baseline (other than LE), which is marked with $^*$.}
\label{exp:all1}
\begin{tabular}{llllllllllllll}

\toprule \multirow{2}{*}{Methods} &
\multirow{2}{*}{Initialization} & \multicolumn{4}{c}{Dataset: \textit{ML-1M}} & \multicolumn{4}{c}{Dataset: \textit{Steam}} & \multicolumn{4}{c}{Dataset: \textit{Anime}}\\
\cmidrule{3-14}

& & HR@5 & HR@10 & F1@5 & F1@10    & HR@5 & HR@10 & F1@5 & F1@10 & HR@5 & HR@10 & F1@5 & F1@10 \\ \toprule
\multirow{8}{*}{NCF} & random     & 0.0525          & 0.0874          & 0.0027          & 0.0044          & 0.0036          & 0.0090          & 0.0005          & 0.0009          & 0.0481          & 0.0591          & 0.0036          & 0.0037          \\
&SVD        & {\ul 0.2980}$^*$    & \textbf{0.4669}$^*$ & \textbf{0.0311}$^*$ & {\ul 0.0512}$^*$    & 0.0791          & 0.1557          & 0.0130          & 0.0203          & 0.0513          & 0.0892          & 0.0047          & 0.0081          \\
& BPR        & 0.2634          & 0.4096          & 0.0234          & 0.0388          & {\ul 0.0995}$^*$    & 0.1984          & {\ul 0.0166}$^*$    & \textbf{0.0269}$^*$ & 0.0490          & 0.0814          & 0.0044          & 0.0075          \\
& NetMF    & 0.2151          & 0.3505          & 0.0180          & 0.0313          & 0.0912          & {\ul 0.2012}$^*$    & 0.0147          & 0.0267          & 0.0540          & 0.0910          & 0.0050$^*$          & 0.0084$^*$          \\
& node2vec & 0.2321 & 0.3776 & 0.0199 & 0.0340 & 0.0748 & 0.1475 & 0.0127 & 0.0196 &--  &--  &--& -- \\
& Graph-Bert & 0.1406          & 0.2354          & 0.0086          & 0.0154          & 0.0823          & 0.1577          & 0.0133          & 0.0196          & 0.0596$^*$          & {\ul 0.1044}$^*$    & 0.0041          & 0.0073          \\
& LE         & 0.2422          & 0.3916          & 0.0219          & 0.0369          & 0.0893          & 0.1766          & 0.0143          & 0.0229          & {\ul 0.0612}    & 0.1013          & {\ul 0.0051}    & {\ul 0.0088}    \\
& \sysname{}    & \textbf{0.3262} & {\ul 0.4515}    & {\ul 0.0311} & \textbf{0.0561} & \textbf{0.1027} & \textbf{0.2025} & \textbf{0.0167} & \textbf{0.0269} & \textbf{0.0678} & \textbf{0.1111} & \textbf{0.0054} & \textbf{0.0094}
 \\  \cmidrule{2-14}
\multicolumn{2}{r}{Relative Imp.} & 9.46\%  & -3.30\% & -0.02\% & 9.49\% & 3.22\% & 0.65\% & 0.70\% & -0.20\% & 13.76\% & 6.42\% & 7.71\% & 12.82\% \\
\bottomrule
\multirow{8}{*}{NGCF} & random     & 0.0725          & 0.1343          & 0.0039          & 0.0070          & 0.1072          & 0.1750          & 0.0179          & 0.0227          & 0.1789          & 0.3023          & 0.0253          & 0.0376          \\
& SVD        & 0.3164          & 0.4674          & 0.0360          & 0.0549          & 0.1050          & 0.1752          & 0.0175          & 0.0226          & 0.3585$^*$          & 0.5186$^*$          & 0.0461$^*$          & 0.0607$^*$         \\
& BPR        & 0.3301$^*$          & 0.4815$^*$          & 0.0420$^*$          & 0.0647$^*$          & 0.1071          & 0.1747          & 0.0179          & 0.0226          & 0.1719          & 0.2849          & 0.0338          & 0.0450          \\
& NetMF    & 0.2873          & 0.4272          & 0.0297          & 0.0481          & 0.1016          & 0.1699          & 0.0173          & 0.0220          & 0.2355          & 0.3798          & 0.0342          & 0.0512          \\
& node2vec & 0.3195 & 0.4699 & 0.0369 & 0.0573 & {\ul 0.1193}$^*$ & {\ul 0.1939}$^*$ & {\ul 0.0205}$^*$ & {\ul 0.0255}$^*$ & -- & -- &--  & --\\
& Graph-Bert & 0.3169          & 0.4654          & 0.0354          & 0.0531          & 0.0980          & 0.1621          & 0.0161          & 0.0206          & 0.0847          & 0.1559          & 0.0059          & 0.0103          \\
& LE         & {\ul 0.3525}    & {\ul 0.5070}    & {\ul 0.0436}    & \textbf{0.0679} & {0.1102}    & {0.1818}    & {0.0184}    & {0.0238}    & {\ul 0.3739}    & {\ul 0.5270}    & {\ul 0.0496}    & {\ul 0.0760}    \\
& \sysname{}    & \textbf{0.3785} & \textbf{0.5258} & \textbf{0.0439} & {\ul 0.0667}    & \textbf{0.1342} & \textbf{0.2084} & \textbf{0.0228} & \textbf{0.0277} & \textbf{0.5098} & \textbf{0.6401} & \textbf{0.0712} & \textbf{0.1014}
 \\  \cmidrule{2-14}

 \multicolumn{2}{r}{Relative Imp.} & 14.66\% & 9.20\% & 4.66\% & 3.09\% & 12.49\% & 7.48\% & 11.21\% & 8.90\% & 42.20\% & 23.43\% & 54.56\% & 67.02\% \\
\bottomrule 
\multirow{8}{*}{DGCF} & random     & 0.3505          & 0.4964          & 0.0395          & 0.0611          & 0.1510          & 0.2321          & 0.0262          & 0.0317          & 0.2973          & 0.4474          & 0.0432          & 0.0618          \\
& SVD        &  0.3530    & 0.4982$^*$          & 0.0407          & 0.0635          & 0.1549          & 0.2365          & 0.0270          & 0.0325          & {\ul 0.3523}$^*$    & {\ul 0.5088}$^*$    & 0.0483$^*$          & 0.0697$^*$          \\
& BPR        & 0.3412          & 0.4964          & 0.0413          & 0.0647          & 0.1515          & 0.2371          & 0.0263          & 0.0320          & 0.2900          & 0.4403          & 0.0413          & 0.0600          \\
& NetMF    & 0.3435          & 0.4977          & {\ul 0.0421}$^*$    & {\ul 0.0650}$^*$    & {\ul 0.1563}$^*$    & {\ul 0.2439}$^*$    & {\ul 0.0274}$^*$    & {\ul 0.0336}$^*$    & 0.3394          & 0.4963          & 0.0467          & 0.0677          \\
 & node2vec & {\ul 0.3533}$^*$ & 0.4926 & 0.0379 & 0.0597 & 0.1540 & 0.2420 & 0.0265 & 0.0325 & -- & -- & -- & -- \\
& Graph-Bert & 0.3483          & 0.4891          & 0.0366          & 0.0592          & 0.1487          & 0.2283          & 0.0256          & 0.0312          & 0.2939          & 0.4494          & 0.0420          & 0.0606          \\
& LE         & 0.3485          & {\ul 0.5032}    & 0.0408          & 0.0643          & 0.1551          & 0.2353          & 0.0269          & 0.0320          & 0.3506          & 0.5048          & {\ul 0.0488}    & {\ul 0.0702}    \\
& \sysname{}    & \textbf{0.3588} & \textbf{0.5045} & \textbf{0.0425} & \textbf{0.0653} & \textbf{0.1661} & \textbf{0.2530} & \textbf{0.0292} & \textbf{0.0349} & \textbf{0.5264} & \textbf{0.6477} & \textbf{0.0679} & \textbf{0.0916}

 \\  \cmidrule{2-14}

\multicolumn{2}{r}{Relative Imp.} & 1.56\% & 1.26\% & 0.97\% & 0.44\% & 6.27\% & 3.73\% & 6.46\% & 3.78\% & 49.42\% & 27.30\% & 40.77\% & 31.50\% \\
\bottomrule

\multirow{8}{*}{HGN} & random     & 0.3270          & 0.4879          & 0.0465          & 0.0688          & 0.1198          & 0.1977          & 0.0205          & 0.0261          & 0.1287          & 0.2457          & 0.0192          & 0.0305          \\
& SVD        & 0.3623          & 0.5315          & {\ul 0.0536}$^*$    & {\ul 0.0792}$^*$    & {\ul 0.1715}$^*$    & {0.2594}    & {\ul 0.0298}$^*$    & {0.0358}    & 0.4296$^*$          & 0.5989$^*$          & 0.0496          & 0.0764$^*$          \\
& BPR        & 0.3377          & 0.4985          & 0.0452          & 0.0688          & 0.1221          & 0.2058          & 0.0205          & 0.0272          & 0.2648          & 0.4070          & 0.0359          & 0.0529          \\
& NetMF    & 0.3806$^*$          & 0.5425$^*$          & 0.0505          & 0.0737          & 0.1479          & 0.2338          & 0.0255          & 0.0318          & 0.4115          & 0.5552          & 0.0514$^*$          & 0.0704          \\
& node2vec & 0.3649 & 0.5333 & 0.0533 & 0.0783 & 0.1686 & {\ul 0.2694}$^*$ & 0.0297 & {\ul 0.0377}$^*$ & -- & -- & -- & -- \\
& Graph-Bert & 0.2934          & 0.4379          & 0.0324          & 0.0489          & 0.1304          & 0.2053          & 0.0221          & 0.0270          & 0.4147          & 0.5692          & 0.0506          & 0.0733          \\
& LE         & \textbf{0.3912} & {\ul 0.5460}    & 0.0507          & 0.0741          & 0.1689          & 0.2524          & 0.0293          & 0.0344          & {\ul 0.4753}    & {\ul 0.6374}    & {\ul 0.0605}    & {\ul 0.0875}    \\
& \sysname{}    & {\ul 0.3846}    & \textbf{0.5475} & \textbf{0.0543} & \textbf{0.0797} & \textbf{0.1822} & \textbf{0.2745} & \textbf{0.0318} & \textbf{0.0380} & \textbf{0.5010} & \textbf{0.6488} & \textbf{0.0622} & \textbf{0.0902}

\\  \cmidrule{2-14}

\multicolumn{2}{r}{Relative Imp.} & 1.05\% & 0.92\% & 1.41\% & 0.70\% & 6.24\% & 1.89\% & 6.75\% & 0.76\% & 16.62\% & 8.33\% & 21.08\% & 18.13\% \\
\bottomrule
\multirow{8}{*}{DLR$^2$} & random     & 0.3238          & 0.4538          & 0.0310          & 0.0497          & 0.1322          & 0.1912          & 0.0225          & 0.0247          & 0.4191          & 0.5976          & 0.0468          & 0.0744          \\
& SVD        & 0.4263          & 0.5820          & 0.0494          & 0.0775          & 0.1275          & 0.1913          & 0.0212          & 0.0241          & 0.5208          & 0.6469          & 0.0626          & 0.0831          \\
& BPR        & 0.4750          & 0.6273          & 0.0621          & 0.0979          & 0.1677 $^*$         & 0.2565$^*$          & 0.0290$^*$          & 0.0359$^*$          & 0.5822$^*$          & 0.7167$^*$          & 0.0909$^*$          & 0.1299$^*$          \\
& NetMF    & 0.3969          & 0.5508          & 0.0487          & 0.0770          & 0.1522          & 0.2534          & 0.0260          & 0.0354          & 0.5526          & 0.6795          & 0.0820          & 0.1189          \\
& node2vec & {0.4947}$^*$ & {0.6603}$^*$ & {0.0640}$^*$ & {0.1011}$^*$ & 0.1504 & 0.2408 & 0.0256 & 0.0328 & -- & -- &--  &-- \\
& Graph-Bert & 0.3323          & 0.4553          & 0.0308          & 0.0497          & 0.1392          & 0.1976          & 0.0230          & 0.0251          & 0.4100          & 0.5708          & 0.0448          & 0.0700          \\ 

& LE         & {\ul 0.5119}    & {\ul 0.6730}    & {\ul 0.0692}    & {\ul 0.1070}    & {\ul 0.1853}    & {\ul 0.2878}    & {\ul 0.0332}    & {\ul 0.0420}    & {\ul 0.6065}    & {\ul 0.7438}    & {\ul 0.1010} & {\ul 0.1409}    \\ 

& \sysname{}    & \textbf{0.5406} & \textbf{0.7026} & \textbf{0.0755} & \textbf{0.1158} & \textbf{0.1958} & \textbf{0.2994} & \textbf{0.0352} & \textbf{0.0437} & \textbf{0.6361} & \textbf{0.7654} & \textbf{0.1010} & \textbf{0.1421}

\\  \cmidrule{2-14}

\multicolumn{2}{r}{Relative Imp.} & 9.28\% & 6.41\% & 18.00\% & 14.62\% & 16.76\% & 16.73\% & 21.40\% & 21.68\% & 9.26\%  & 6.80\% & 11.08\% & 9.42\% \\ \bottomrule

\end{tabular}
\end{table*}

\subsection{Recommendation Baselines}

In order to estimate the effectiveness of our initialization methods, we select a comprehensive list of neural recommendation methods, including general collaborative filtering methods, NCF \cite{ncf}, graph convolution methods, NGCF \cite{NGCF19} and DGCF \cite{DGCF19}, and sequential recommendation methods, HGN \cite{hgn} and our proposed \modname{}. 

We also compare our recommendation model \modname{} to traditional recommendation methods used in \cite{recsys19}, including the non-personalized method, TopPop, which always recommends the most popular item. We also include traditional KNN approaches (UserKNN \cite{userknn} and ItemKNN \cite{itemknn}) and matrix factorization approaches (BPR \cite{bpr} and SLIM \cite{slim}).
Among the traditional baselines in \cite{recsys19}, we find P\textsuperscript{3}$\alpha$ and RP\textsuperscript{3}$\alpha$ to consistently underperform in all experiments and opt not to include them due to space limits. 
Experimental settings of these models can be found in the Appendix.

\begin{table*}[t]
\small
\centering
\caption{Performances of different algorithms. The best results of all are in bold faces and the second best results are underlined. The best results of the group above are marked with $^*$. Relative Imp. indicates to the relative improvement of \modname{} with \sysname{} initialization over the best baseline within the group.}
\label{exp:all2}
\begin{tabular}{lclllllllllllll}

\toprule
\multirow{2}{*}{Methods} & \multirow{2}{*}{Initialization} & \multicolumn{4}{c}{Dataset: \textit{ML-1M}} & \multicolumn{4}{c}{Dataset: \textit{Steam}} & \multicolumn{4}{c}{Dataset: \textit{Anime}}\\
\cmidrule{3-14}

&& HR@5 & HR@10 & F1@5 & F1@10    & HR@5 & HR@10 & F1@5 & F1@10 & HR@5 & HR@10 & F1@5 & F1@10 \\ \midrule
TopPop  &--& 0.3197          & 0.4377          & 0.0335          & 0.0517          & 0.1413          & 0.2306          & 0.0239          & 0.0307          & 0.4091          & 0.6103          & 0.0504          & 0.0807          \\
UserKNN &-- & 0.3851          & 0.5419$^*$          & 0.0483$^*$          & 0.0760 $^*$         & 0.1675          & 0.2575          & 0.0292          & 0.0357          & 0.4098          & 0.5899          & 0.0579          & 0.0802          \\
ItemKNN &-- & {0.3868}$^*$    & 0.5235          & 0.0462          & 0.0682          & 0.1729$^*$          & 0.2720$^*$          & 0.0309$^*$          & {0.0385}$^*$    & {0.5909}$^*$    & {0.7431}$^*$    & {0.0949}$^*$    & {0.1204}$^*$    \\
SLIM    &-- & 0.3692          & 0.5253          & 0.0435          & 0.0680          & 0.1671          & 0.2588          & 0.0290          & 0.0354          & 0.5337          & 0.7132          & 0.0684          & 0.1011          \\
BPR  & random   & 0.0851          & 0.1581          & 0.0107          & 0.0171          & 0.1582          & 0.2348          & 0.0279          & 0.0338          & 0.0406          & 0.0800          & 0.0074          & 0.0114          \\  \cmidrule{3-14}
\multicolumn{2}{c}{Relative Imp.}  & 39.76\% & 29.65\% & 56.18\% & 52.38\% & 13.24\% & 10.07\% & 13.98\% & 13.39\% & 7.65\% & 3.00\% & 6.43\% & 18.04\% \\
\midrule
BPR & \sysname{}     & 0.1094          & 0.2106          & 0.0151          & 0.0247          & 0.1638          & 0.2442          & 0.0293          & 0.0353          & 0.0810          & 0.1455          & 0.0170          & 0.0233          \\
NCF & \sysname{}     & 0.3262          & 0.4515          & 0.0311          & 0.0561          & 0.1027          & 0.2025          & 0.0167          & 0.0269          & 0.0678          & 0.1111          & 0.0054          & 0.0094          \\
NGCF & \sysname{}    & 0.3785          & 0.5258          & 0.0439          & 0.0667          & 0.1342          & 0.2084          & 0.0228          & 0.0277          & 0.5098          & 0.6401          & 0.0712$^*$          & 0.1014$^*$          \\
DGCF & \sysname{}    & 0.3588          & 0.5045          & 0.0425          & 0.0653          & 0.1661          & 0.2530          & 0.0292          & 0.0349          & 0.5264$^*$          & 0.6477          & 0.0679          & 0.0916          \\
HGN & \sysname{}     & 0.3846 $^*$         & {0.5475}$^*$    & {0.0543}$^*$    & {0.0797}$^*$    & {0.1822}$^*$    & {0.2745}$^*$    & {0.0318}$^*$    & 0.0380$^*$          & 0.5010          & 0.6488$^*$          & 0.0622          & 0.0902          \\ \cmidrule{3-14}
\multicolumn{2}{c}{Relative Imp.} & 40.56\% & 28.33\% & 38.98\% & 45.28\% & 7.46\%  & 9.07\%  & 10.93\% & 14.89\% & 20.84\% & 17.97\% & 41.84\% & 40.20\% 
\\ \midrule
\modname{} & LE        & {\ul 0.5119}    & {\ul 0.6730}    & {\ul 0.0692}    & {\ul 0.1070}    & {\ul 0.1853}    & {\ul 0.2878}    & {\ul 0.0332}    & {\ul 0.0420}    & {\ul 0.6065}    & {\ul 0.7438}    & {\ul 0.1010} & {\ul 0.1409}    \\ \cmidrule{3-14}
\multicolumn{2}{c}{Relative Imp.} & 5.61\%  & 4.40\%  & 9.07\%  & 8.26\%  & 5.67\%  & 4.03\%  & 6.29\%  & 3.85\%  & 4.88\%  & 2.90\%  & 0.03\%  & 0.86\%  \\ \midrule

\modname{} & \sysname{}   & \textbf{0.5406} & \textbf{0.7026} & \textbf{0.0755} & \textbf{0.1158} & \textbf{0.1958} & \textbf{0.2994} & \textbf{0.0352} & \textbf{0.0437} & \textbf{0.6361} & \textbf{0.7654} & \textbf{0.1010} & \textbf{0.1421}
\\ 
\bottomrule

\end{tabular}
\end{table*}
\subsection{Evaluation Metrics}

To evaluate Top-N recommended items, we use hit ratio (HR@N) and F1 score (F1@N) at N = 1, 5, 10. Hit ratio intuitively measures whether the test item is present on the top-N list. F1 score is the harmonic mean of the precision and recall, which arguably gives a better measure of the incorrectly classified cases than the accuracy metric. We noticed the same tendency of precision and recall, and list the results only in terms of F1 score due to space limitation. 

\subsection{Results and Discussion}
\subsubsection{Comparison of Different Initialization}

Table \ref{exp:all1} shows the experimental results of different initialization on five different types of neural recommendation methods. We make the following observations. 
First, the choice of initialization has strong effects on the final performance. For example, for NCF on \textit{ML-1M} dataset, the F1 scores of \sysname{} are more than 10 times of random initialization. 
Second, \sysname{} consistently achieves the top position on most datasets and metrics; LE is a strong second, finishing in the top two in 32 out of the 60 metrics. For example, on  \textit{Anime} with the NGCF model, LE and \sysname{} lead the best baseline method by 25.22\% and 67.02\% respectively in F1@10. In DGCF and \modname{}, \sysname{} outperforms all other initialization methods. In other recommender systems, \sysname{} achieves the best performance in all but 1 or 2 metrics. 
Third, NGCF and DGCF benefit substantially from \sysname{}, even though they are GNN methods that already utilize graph information. This suggests \sysname{} is complementary to common graph networks. 



\subsubsection{Comparison of Recommendation Models}
Table \ref{exp:all2} compares traditional recommendation methods, \sysname{} initialized recommendation methods, and \modname{} method with LE and \sysname{} initialization. Like \cite{recsys19}, we find well-tuned UserKNN and ItemKNN competitive. Among traditional recommendation methods, the two KNN methods claim the top spots on all metrics of all datasets. However, on \textit{ML-1M} and \textit{Steam}, \sysname-initialized neural recommendation baselines outperform the KNN methods. Together with the results of \cite{recsys19}, this finding highlights the importance of initialization for recommender systems. 


Importantly, \modname{} with \sysname{} initialization outperforms all other methods, including strong traditional models and well-initialized neural recommendation models. For instance, on \textit{ML-1M} dataset and the F1@10 metric, our method leads traditional methods and neural baselines by 52.38\% and 45.28\%, respectively.





\begin{figure*}%
    \centering
    \subfloat[\centering Performance of different architectures in Feature Network in \modname{} on \emph{Anime} dataset, initialized by \sysname{}. \label{fig:architecture}]{{\includegraphics[width=2\columnwidth/3]{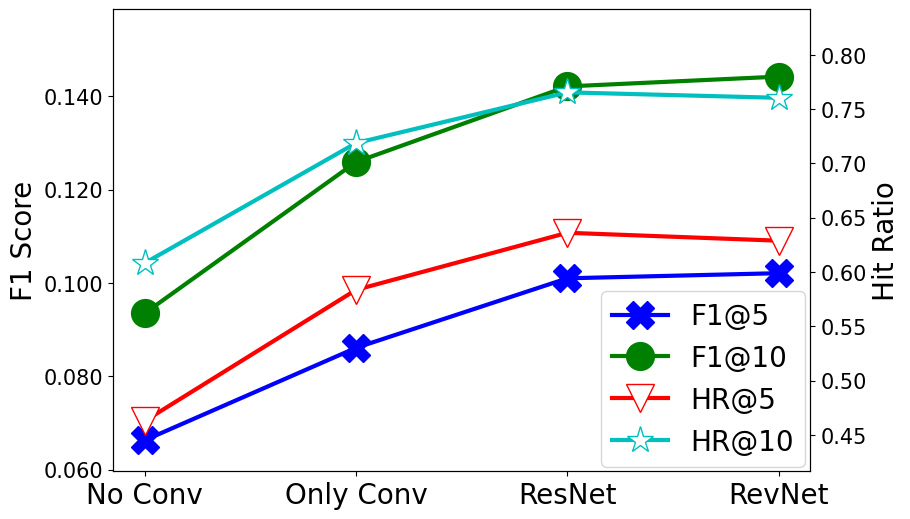} }}%
    \subfloat[\centering Performance of different regularization coefficient $\alpha$ in \sysname{} initialization on \emph{ML-1M} dataset, evaluated on \modname{}. \label{fig:alpha}]{{\includegraphics[width=2\columnwidth/3]{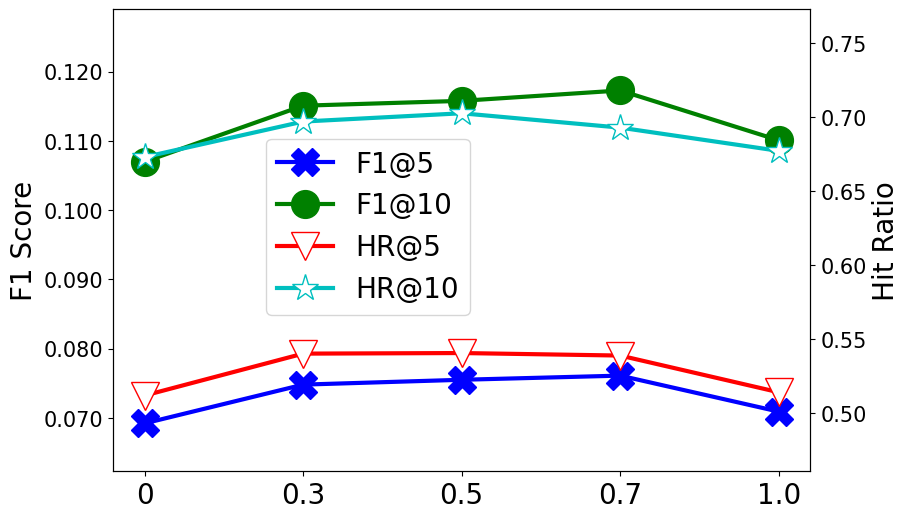} }}
    \subfloat[\centering Performance of different neighbor $K$ in \sysname{} initialization on \emph{ML-1M} dataset, evaluated on \modname. \label{fig:neighbor}]{{\includegraphics[width=2\columnwidth/3]{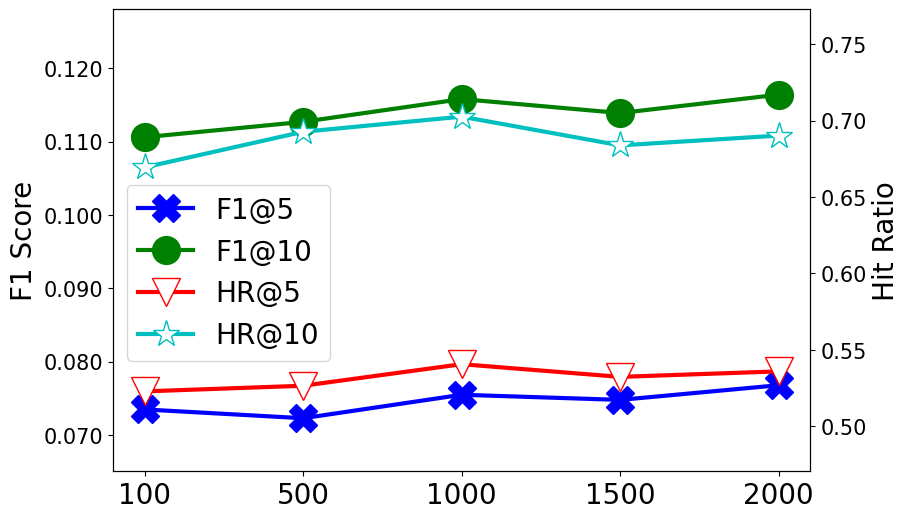} }}%
    \caption{Ablation Study}
    \label{fig:ablation}
\end{figure*}


\begin{table}[t]
\centering
\caption{Performances of different initialization methods on \modname{} on tail users with the least interaction data. ``Relative Imp.'' refers to the relative improvement of \sysname{} over the best baseline, including LE.}
\label{exp:bottom}
\begin{tabular}{cccccc}
\toprule
Dataset & Initialization & HR@5 & HR@10 & F1@5 & F1@10\\
\midrule

\multirow{5}{*}{\textit{ML-1M}} & random  & 0.1490          & 0.2490          & 0.0307          & 0.0382          \\
& SVD     & 0.2623          & 0.4166          & 0.0558          & 0.0722          \\
& LE      & 0.3192          & 0.4649          & 0.0724          & 0.0880          \\
& \sysname{} & \textbf{0.3536} & \textbf{0.5159} & \textbf{0.0823} & \textbf{0.1045} \\ \cmidrule{2-6}
& Relative Imp. & 10.78\%         & 10.97\%         & 13.67\%         & 18.72\%         \\
                                \cmidrule{1-6}
\multirow{5}{*}{\textit{Steam}}          & random       & 0.0817          & 0.1443          & 0.0177          & 0.0212          \\
 & SVD          & 0.0704          & 0.1072          & 0.0153          & 0.0155          \\
 & LE           & 0.0973          & 0.1684          & 0.0215          & 0.0258          \\
 & \sysname{}      & \textbf{0.1240} & \textbf{0.2030} & \textbf{0.0284} & \textbf{0.0321} \\ \cmidrule{2-6}
 & Relative Imp. & 27.44\%         & 20.55\%         & 31.99\%         & 24.44\%              \\ \cmidrule{1-6}
\multirow{5}{*}{\textit{Anime}}         & random       & 0.3209          & 0.4594          & 0.0680          & 0.0732          \\
 & SVD          & 0.2989          & 0.4661          & 0.0638          & 0.0765          \\
 & LE           & 0.3617          & 0.5205          & 0.0811          & 0.0955          \\
 & \sysname{}      & \textbf{0.3953} & \textbf{0.5408} & \textbf{0.0937} & \textbf{0.1195} \\\cmidrule{2-6}
 & Relative Imp. & 9.29\%          & 3.90\%          & 15.46\%         & 25.13\% \\
\bottomrule
\end{tabular}
\end{table}

\begin{table}[t]
\centering
\caption{Performances of networks trained with different supervision on \textit{ML-1M}.  
}
\vspace{-0.5cm}
\label{exp:dualloss}
\begin{tabular}{@{}ccllll@{}}
\\ \toprule
{\shortstack[t]{Inference\\ Branch} } & {\shortstack[t]{Loss\\ Function}}   & {HR@5} & {HR@10} & {F1@5} & {F1@10} \\ \midrule
\multirow{2}{*}{$\mathcal{G}$} & $\mathcal{L}_{G}$ & 0.3631    &   0.4982   & 0.0427 & 0.0601\\ 
 & $\mathcal{L}_{G} +\mathcal{L}_{S}$ & 0.4955    & 0.6510   & 0.0677 & 0.1012 \\\cmidrule{2-6}
\multicolumn{2}{r}{Relative Imp.} & 36.46\% & 30.67\% & 58.66\% & 68.41\% \\
\midrule
\multirow{2}{*}{$\mathcal{S}$} & $\mathcal{L}_{S}$ &  0.5228    & 0.6826   & 0.0744 & 0.1112   \\
 & $\mathcal{L}_{G} + \mathcal{L}_{S}$ & \textbf{0.5406}    & \textbf{0.7026}   & \textbf{0.0755} & \textbf{0.1158}  \\\cmidrule{2-6}
\multicolumn{2}{r}{Relative Imp.} & 3.40\% & 2.93\% & 1.49\% & 4.19\% \\
 \bottomrule
\end{tabular}
\end{table}

\subsubsection{Performance on Tail Users}
\sysname{} adaptively regularizes the embeddings of nodes with low degrees, which should benefit users with short interaction histories. In this experiment, we verify that claim by examining model performance on the 25\% users with the least interaction records. Selecting all of their interactions yields 9,771, 37,905, and 76,353 records for \emph{ML-1M}, \emph{Steam}, and \emph{Anime} test sets, respectively. We include all items because further exclusion would leave too few data for accurate performance measurement. 

We compare the performances of different initialization schemes on \modname{} in Table~\ref{exp:bottom}. We include SVD initialization because it is the best baseline on the most metrics (20 out of 60) from Table \ref{exp:all1}.
\sysname{} achieves higher relative performance improvements on tail users than on the entire test sets, as shown in Table \ref{exp:all2}. 


\subsubsection{Effects of Architecture Design in \modname}
\label{sec:difarch}
The simple model of \modname{} has achieved the best performance thus far. We now explore architectural factors that contribute to its success. The literature suggests that residual connections may improve the quality of learned representation. In unsupervised learning, \cite{Alexander2019} finds that residual connections help to maintain the quality of learned representations in CNN throughout the network. \cite{behrmann2019invertible} proves that, under certain conditions, the output of residual networks contains all information needed to reconstruct the input. That is, the network is ``invertible''. This effect may facilitate \modname{} in fully utilizing the strength of the initialization schemes. 

We create three variations of \modname{} in order to observe the effects of residual connections and invertible representation. ``No Conv'' replaces all convolutional layers with fully connected layers and has no residual connections. ``Only Conv'' has convolutional layers but not the residual connections. ``ResNet'' denotes the complete \modname{} model. ``RevNet''  replaces the convolutional-residual design with the RevNet \cite{revnet} architecture, which is guaranteed to produce invertible features. Due to space limitation, we only present the results on \textit{Anime} dataset in Figure \ref{fig:architecture}. The results on other datasets are similar and can be found in the Appendix. 

Observing Figure \ref{fig:architecture}, we note that both convolutional layers and residual connections lead to significant performance improvements. The RevNet architecture achieves comparable performance with \modname{}, which suggests that learning invertible representations may facilitate the effective use of good initializations. 




\subsubsection{Effects of the Two Loss Functions in \modname}
The \modname{} model employs a discriminative loss $\mathcal{L}_{S}$ and a generative loss $\mathcal{L}_{G}$. We now examine their effects on the final performance using an ablation study. Table \ref{exp:dualloss} shows the performance when one of the loss functions is removed. When adding $\mathcal{L}_{S}$ to the $\mathcal{L}_{G}$-only network, F1@10 increases relatively by 68.41\%. Similarly, adding the $\mathcal{L}_{G}$ branch to the $\mathcal{L}_{S}$-only network improves F1@10 relatively by 4.19\%.
The results demonstrate that we can harness significant synergy from the two-branch architecture with their respective loss functions.

\subsubsection{Sensitivity of Hyperparameters}
\label{sec:sensitivity}

We analyze the sensitivity of two hyperparameters: the population-based regularization coefficient $\alpha$ in Eq. \ref{eq:l_regsym} and the number of nearest neighbors $K$ used to construct the KNN graph. Due to space limitation, we only report the analysis on \textit{ML-1M} dataset on \modname{} model. 
Figure~\ref{fig:alpha} shows the results when $\alpha$ is selected from $\{0, 0.3, 0.5, 0.7, 1.0\}$. 
We observe that $\alpha=0.7$ achieves the best result.  
Figure \ref{fig:neighbor} shows the performances of \modname{} when the number of the nearest neighbors $K$ is varied. Though higher values of $K$ lead to small improvements, the sensitivity to this hyperparameter is low. 

\section{Conclusions}
In this paper, we propose a novel initialization scheme, \mbox{\sysname}, for neural recommendation systems. \mbox{\sysname} generates embeddings that capture neighborhood structures on the data manifold and adaptively regularize embeddings for tail users and items.
We also propose a sequential recommendation model, \modname{}, to make full use of the \sysname{}. 
Extensive experiments on three real-world datasets demonstrate the effectiveness of the proposed initialization strategy and network architecture. 
We believe \sysname{} represents one step towards unleashing the full potential of deep neural networks in the area of recommendation systems. 

\vspace{0.1in}
\noindent \textbf{Acknowledgments.} This research is supported by the National Research Foundation, Singapore under its AI Singapore  Programme (No. AISG-GC-2019-003), NRF Investigatorship (No. NRF-NRFI05-2019-0002), and NRF Fellowship (No. NRF-NRFF13-2021-0006), and by Alibaba Group through the Alibaba Innovative Research (AIR) Program and Alibaba-NTU Singapore Joint Research Institute (JRI).




\bibliographystyle{ACM-Reference-Format}
\bibliography{sample-base}

\appendix

\section{The Optimization of \sysname}

\sysname{} solves the following minimization problem using eigendecomposition of $L_{\text{reg}} = (1-\alpha)D-W+\alpha d_{\text{max}} I$. 
\begin{equation}
\begin{split}
    \eigv[k] = & \argmin_{\eigv} \frac{1}{2} \sum_{i} \sum_{j} W_{i,j} (\bm{q}_{i} - \bm{q}_{j})^2 \\ 
    + & \alpha \sum_{i} (d_{\text{max}} - d_i) q_i^2\\
    = & \argmin_{\eigv} \eigv^\top L_{\text{reg}} \eigv \\
    \text{s.t. \, } & \eigv^\top \eigv = 1 \text{\, and \,} \bm{q}^\top \eigv[l] = 0, \forall l < k.
\end{split}
\label{eq:lepor-optimization-matrix}
\end{equation}
In this section, we closely examine the rationale behind this approach. 

We start by examining Laplacian Eigenmaps, where
the eigenvectors $\eigv[1], \ldots, \eigv[\mathcal{N}]$ can be thought of as solutions of the constrained minimization 
\begin{equation}
\begin{split}
    \eigv[k] = & \argmin_{\eigv} \eigv^\top L \eigv \\
    = & \argmin_{\eigv} \eigv^\top (D-W) \eigv \\
    = & \argmin_{\eigv} \frac{1}{2} \sum_{i}\sum_{j} W_{i,j} (\bm{q}_{i} - \bm{q}_{j})^2 \\
    \text{s.t. \, } & \eigv^\top \eigv = 1 \text{\, and \,} \bm{q}^\top \eigv[l] = 0, \forall l < k.
\end{split}
\label{eq:le-optimization2}
\end{equation}

\begin{lemma}
The eigenvectors of the graph Laplacian, $L = D - W$, are the solutions of the constrained minimization problem described by Eq. \ref{eq:le-optimization2}.
\label{lemma:le}
\end{lemma}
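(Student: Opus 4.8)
The plan is to invoke the spectral theorem together with the variational (Courant--Fischer) characterization of the eigenvalues of a symmetric matrix. Since $L = D - W$ is real symmetric, it admits an orthonormal eigenbasis $\eigv[1], \ldots, \eigv[\mathcal{N}]$ with eigenvalues $0 = \lambda_1 \le \lambda_2 \le \ldots \le \lambda_{\mathcal{N}}$; nonnegativity follows from the PSD property already noted (the identity $\bm{q}^\top L \bm{q} = \tfrac{1}{2}\sum_i\sum_j W_{i,j}(q_i - q_j)^2 \ge 0$), and $\lambda_1 = 0$ is witnessed by the constant vector. That same identity shows the three expressions inside the $\argmin$ in Eq.~\ref{eq:le-optimization2}, namely $\eigv^\top L \eigv$, $\eigv^\top(D-W)\eigv$, and $\tfrac{1}{2}\sum_i\sum_j W_{i,j}(q_i - q_j)^2$, all coincide, so it suffices to analyze $\min \eigv^\top L \eigv$ under the stated norm and orthogonality constraints.

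First I would treat the base case $k = 1$: minimizing the Rayleigh quotient $\eigv^\top L \eigv$ over the unit sphere. Writing an arbitrary unit vector in the eigenbasis as $\eigv = \sum_i c_i \eigv[i]$ with $\sum_i c_i^2 = 1$ gives $\eigv^\top L \eigv = \sum_i \lambda_i c_i^2 \ge \lambda_1$, with equality attained at $\eigv = \eigv[1]$; equivalently, a Lagrange-multiplier argument shows every critical point of the constrained problem satisfies $L\eigv = \mu \eigv$, i.e. is an eigenvector with objective value $\mu$, so the global minimum equals the smallest eigenvalue. For the inductive step, assume $\eigv[1], \ldots, \eigv[k-1]$ solve the problems for indices $1, \ldots, k-1$. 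For any unit vector $\eigv$ with $\eigv^\top \eigv[l] = 0$ for all $l < k$, expanding $\eigv = \sum_i c_i \eigv[i]$ and using orthonormality of the eigenbasis forces $c_1 = \cdots = c_{k-1} = 0$, hence $\eigv^\top L \eigv = \sum_{i \ge k} \lambda_i c_i^2 \ge \lambda_k \sum_{i \ge k} c_i^2 = \lambda_k$, and the bound is tight at $\eigv = \eigv[k]$, which also satisfies all the constraints. This shows $\eigv[k]$ solves the $k$-th constrained problem, with optimal value $\lambda_k$.

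The only real subtlety, and the point I would be careful to address, is uniqueness: when $\lambda_k$ has multiplicity greater than one the minimizer is not unique, since any unit vector in the $\lambda_k$-eigenspace orthogonal to $\eigv[1], \ldots, \eigv[k-1]$ also attains the minimum. Accordingly, the lemma should be read as asserting that the eigenvectors furnish a valid sequence of solutions and, conversely, that every solution of the $k$-th problem is an eigenvector of $L$ for the eigenvalue $\lambda_k$. I would state this explicitly, noting that the residual freedom in the degenerate case is exactly the usual ambiguity in choosing an orthonormal eigenbasis and does not affect the downstream embedding, which depends only on the span of the bottom $\mathcal{D}$ eigenvectors up to rotation. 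No heavy computation is needed: the whole argument is a direct application of the spectral theorem and the Courant--Fischer min-max principle, specialized to $L$.
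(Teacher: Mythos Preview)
Your proposal is correct and follows essentially the same route as the paper: both diagonalize $L$ via the spectral theorem, expand an arbitrary feasible vector in the eigenbasis (the paper writes this as the change of variables $\bm{x}=Q\bm{y}$, you write it as $\eigv=\sum_i c_i\eigv[i]$), and then read off the minimizer coordinate-by-coordinate using the ordering of the eigenvalues and the orthogonality constraints. Your explicit treatment of the non-uniqueness when $\lambda_k$ has multiplicity is a welcome addition that the paper's proof glosses over.
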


\begin{proof}
We aim to find a sequence of solutions $\bm{y}^{(1)}, \ldots, \bm{y}^{(\mathcal{N})}$ as solutions to the constrained optimization problem described by Eq. \ref{eq:le-optimization2}. 

Since the graph Laplacian $L$ is positive semi-definite, it has real-valued eigenvalues ($0 \leq \lambda_1 \leq \lambda_2 \leq \cdots \leq \lambda_{\mathcal{N}}$). $L$ is also symmetric, so we can perform the eigendecomposition
\begin{equation}
    L = Q^\top \Lambda Q,
    \label{eq:decomp}
\end{equation}
where $\Lambda = \text{diag}(\lambda_1, \lambda_2 , \cdots , \lambda_{\mathcal{N}})$ is a diagonal matrix with the eigenvalues on its diagonal. $Q$ is the orthogonal matrix composed of eigenvectors, $\eigv[1], \ldots, \eigv[\mathcal{N}]$, as its rows, and $Q^{-1} = Q^\top$. 

We can denote the minimization objective as $R(\bm{x})$, 
\begin{equation}
    R(\bm{x}) = \bm{x}^\top L \bm{x} = (Q\bm{x})^\top \Lambda (Q\bm{x}).
\end{equation}
For any $\bm{y} \in \mathbb{R}^{\mathcal{N}}$, we can find $\bm{x} = Q\bm{y}$ such that $Q^\top\bm{x} = \bm{y}$:
\begin{equation}
\begin{aligned}
    R(\bm{y}) &= (Q\bm{y})^\top \Lambda (Q\bm{y}) \\&= (QQ^\top\bm{x})^\top \Lambda (QQ^\top\bm{x}) \\&= \bm{x}^\top \Lambda \bm{x} \\&= \lambda_1\bm{x}_1^2 + \lambda_2\bm{x}_2^2 + \cdots + \lambda_n\bm{x}_n^2.
\end{aligned}
\label{eq:eigen-form}
\end{equation}

From the constraint $\bm{y}^\top \bm{y} = 1$, it is easy to see that $\bm{x}^\top \bm{x} = 1$. Recall that the eigenvalues are arranged in ascending order
\begin{equation}
0 \leq \lambda_1 \leq \lambda_2 \leq \ldots \leq \lambda_{\mathcal{N}}. 
\end{equation}
To minimize $R(\bm{y})$, we just need to allocate the norm of $\bm{x}$ entirely to the its first component, resulting in $\bm{x}^{(1)} = [1, 0, \ldots, 0]^\top$. Hence, the corresponding $\bm{y}^{(1)} = Q^\top \bm{x}^{(1)}$ is the first eigenvector $\eigv[1]$. The minimum is $R(\bm{y}^{(1)}) = \lambda_1$. 

From the second solution onward, we have the additional constraints that $\bm{y}^{(k)\top} \bm{y}^{(l)} = 0, \forall l < k$. Plugging in $\bm{y}^{(k)} = Q^\top \bm{x}^{(k)}$, we can see orthogonality in $\bm{y}$ implies orthogonality in $\bm{x}$: $\bm{x}^{(k)\top} \bm{x}^{(l)} = 0, \forall l < k$. To find $\bm{x}^{(2)}$ that is orthogonal to the first solution $\bm{x}^{(1)}$, we just need to set its first component to 0. Hence, the minimizing solution $\bm{x}^{(2)} = [0, 1, 0, \ldots, 0]^\top$ and $\bm{y}^{(2)} = \eigv[2]$.

Repeating the above steps, we can show the eigenvectors $\eigv[1], \ldots, \eigv[\mathcal{N}]$ are the solutions to the constrained minimization problem. \footnote{This is a widely known proof and we do not make any claim of originality.}
\end{proof}

\begin{theorem}
The eigenvectors of the regularized graph Laplacian, $L_{\text{reg}}$, are the solutions of the constrained minimization problem described by Eq. \ref{eq:lepor-optimization-matrix}.
\end{theorem}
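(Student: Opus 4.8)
The plan is to reduce the statement to Lemma~\ref{lemma:le} by observing that the proof of that lemma used only two properties of the matrix $L$: that it is real symmetric, and that it is positive semi-definite (PSD). If I can show that $L_{\text{reg}} = (1-\alpha)D - W + \alpha d_{\max} I$ enjoys the same two properties, then the change-of-variables argument carries over verbatim: eigendecompose $L_{\text{reg}} = Q^\top \Lambda Q$ with $0 \le \lambda_1 \le \cdots \le \lambda_{\mathcal{N}}$, substitute $\bm{x} = Q\bm{q}$ so that the Rayleigh quotient $\bm{q}^\top L_{\text{reg}}\bm{q}$ becomes $\sum_k \lambda_k x_k^2$ under the constraint $\|\bm{x}\|_2 = 1$, note that orthogonality in $\bm{q}$ is equivalent to orthogonality in $\bm{x}$, and conclude that the $k$-th minimizer allocates all unit norm to the lowest still-available coordinate, i.e. $\bm{x}^{(k)} = \bm{e}_k$ and hence $\eigv[k] = Q^\top \bm{e}_k$ is the $k$-th eigenvector of $L_{\text{reg}}$.

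First I would establish the quadratic-form identity linking the two ways of writing the objective in Eq.~\ref{eq:lepor-optimization-matrix}. Using the standard expansion $\bm{q}^\top (D-W)\bm{q} = \tfrac12 \sum_i \sum_j W_{i,j}(q_i - q_j)^2$ together with $\bm{q}^\top(\alpha d_{\max} I - \alpha D)\bm{q} = \alpha \sum_i (d_{\max} - d_i) q_i^2$, linearity gives
\[
\bm{q}^\top L_{\text{reg}} \bm{q} = \frac{1}{2}\sum_i\sum_j W_{i,j}(q_i - q_j)^2 + \alpha \sum_i (d_{\max} - d_i) q_i^2 ,
\]
so the scalar-sum form and the matrix form of the objective coincide and the feasible set (unit norm, orthogonality to previous solutions) is literally the one already handled in Lemma~\ref{lemma:le}.

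Next I would check the two structural properties. Symmetry of $L_{\text{reg}}$ is immediate since $D$, $W$ and $I$ are symmetric. For PSD-ness, the identity just derived exhibits $\bm{q}^\top L_{\text{reg}} \bm{q}$ as a sum of two non-negative terms: the first is $\ge 0$ because $W_{i,j}\ge 0$, and the second is $\ge 0$ because $\alpha \ge 0$ and $d_{\max} - d_i \ge 0$ for every $i$ by definition of $d_{\max}$. Hence $L_{\text{reg}}$ is PSD with real non-negative eigenvalues and an orthonormal eigenbasis, and the argument of Lemma~\ref{lemma:le} applied to $L_{\text{reg}}$ in place of $L$ finishes the proof.

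I do not expect a deep obstacle here; the only point requiring care is making the reduction explicit, namely stating clearly that the proof of Lemma~\ref{lemma:le} never used anything about $D-W$ beyond symmetry and PSD-ness, so that substituting $L_{\text{reg}}$ is legitimate. A minor secondary point is recording the regime in which the ``remains PSD'' claim and the intended regularization reading hold: only $\alpha \ge 0$ together with $d_{\max} \ge d_i$ is actually needed for the PSD step, while the experiments restrict further to $\alpha \in [0,1]$.
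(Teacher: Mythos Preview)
Your proposal is correct and follows the same approach as the paper: verify that $L_{\text{reg}}$ is symmetric and positive semi-definite (using $d_{\max} \ge d_i$ and $\alpha \ge 0$), then invoke the proof of Lemma~\ref{lemma:le} verbatim. Your write-up is in fact more thorough than the paper's own two-line proof, since you also spell out the quadratic-form identity and explicitly note which properties of $L$ the lemma's argument actually used.
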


\begin{proof}
Because $d_{max} = \max (d_i \in D)$, the regularized graph Laplacian, $L_{\text{reg}}$, remains positive semi-definite. Applying the same proof in Lemma \ref{lemma:le}, the eigenvectors of $L_{\text{reg}}$ are the solutions for the constrained minimization problem described by Eq. \ref{eq:lepor-optimization-matrix}.
\end{proof}




\begin{table*}[t]
\centering
\caption{Performances of different recommendation architectures. The best results are in bold faces and the second best results are underlined.}
\label{exp:difarc}
\begin{tabular}{lllllllllllll}
\toprule
\multirow{2}{*}{Methods} & \multicolumn{4}{c}{Dataset: \textit{ML-1M}} & \multicolumn{4}{c}{Dataset: \textit{Steam}} & \multicolumn{4}{c}{Dataset: \textit{Anime}}\\
\cmidrule{2-13}
& HR@5 & HR@10 & F1@5 & F1@10    & HR@5 & HR@10 & F1@5 & F1@10 & HR@5 & HR@10 & F1@5 & F1@10 \\ \midrule
NO conv   & 0.4331          & 0.5781          & 0.0539          & 0.0815          & 0.1729          & 0.2667          & 0.0301          & 0.0380          & 0.4627          & 0.6081          & 0.0663          & 0.0935          \\
ONLY conv & 0.5088          & 0.6728          & 0.0718          & 0.1101          & 0.1777          & 0.2755          & 0.0313          & 0.0395          & 0.5835          & 0.7187          & 0.0861          & 0.1259          \\
ResNet    & \textbf{0.5406} & \textbf{0.7026} & {\ul 0.0755}    & {\ul 0.1158}    & {\ul 0.1958}    & {\ul 0.2994}    & {\ul 0.0352}    & {\ul 0.0437}    & \textbf{0.6361} & \textbf{0.7654} & {\ul 0.1010}    & {\ul 0.1421}    \\
RevNet    & {\ul 0.5394}    & {\ul 0.6965}    & \textbf{0.0766} & \textbf{0.1162} & \textbf{0.1962} & \textbf{0.3023} & \textbf{0.0353} & \textbf{0.0443} & {\ul 0.6288}    & {\ul 0.7604}    & \textbf{0.1021} & \textbf{0.1442}\\
\bottomrule
\end{tabular}
\end{table*}

\section{Performances of Different Architecture Designs in \modname{}}
Table \ref{exp:difarc} summarizes the experimental results of different architecture designs in \modname{}, on \textit{ML-1M}, \textit{Steam} and \textit{Anime} datsets.

\section{Reproducibility: Time Cost}
For eigen decomposition of both LE and \sysname{} initialization schemes, we use the package of ``scipy.sparse.linalg.eigs'', which is an implementation of IRLM method \cite{irlm,hou2018fast}. However, IRLM method is proposed for computing the first several largest eigenvalues. We shift the initial matrix $W$ using  $W' = \lambda I - W$, where $\lambda = \max (W)$. The result is averaged over 3 repeated experiments.

\section{Reproducibility: Random Seed}
Readers may replicate the results reported in the paper using the random seed of 123. 

\section{Reproducibility: The Simulation in Figure \ref{fig:embedding-change}}

The simulation aims to find the change in the embeddings of nodes when one additional edge is attached to it. To do so, it gathers statistics from 50 random Barabasi-Albert graphs \cite{BA-graph2002} using the NetworkX library.\footnote{https://networkx.github.io/} The algorithm starts from an initial graph with $m_0$ nodes and no edges and gradually adds nodes and random edges until the graph contains $n$ nodes. Here $m_0 = 10$ and $n=100$.

In every random graph, we pick one random node and add a random edge to it. We compute 40-dimensional embedding for that node before and after the addition using Laplacian Eigenmaps. We then compute the $\ell_2$ norm of the change in the embeddings. We perform 30 independent edge insertions for every node in the graph, and record the degree of the node before the addition and the average $\ell_2$ norm. The statistics are averaged again over 50 random graphs. 

Note the random Barabasi-Albert graphs thus generated have very few nodes with degrees less than 10 or greater than 60. That is why the x-axis in Figure 1 starts at 10 and ends at 60. 

\section{Reproducibility: Experimental Settings of Initialization Schemes}
\label{appendix:initialization_setting}
The output embedding size for all initialization schemes is set to $64$. Specifically, in BPR \cite{bpr}, regularization coefficients for both users and items are set to $0.1$ and the coefficient for gradient update is set to $0.25$. NetMF\footnote{https://github.com/xptree/NetMF} \cite{netmf} is implemented using a large window size (i.e., 10), which usually performs better than using a small window size (i.e., 1). In node2vec\footnote{https://github.com/aditya-grover/node2vec} \cite{node2vec} initialization scheme, the length of walking per source is set to $80$ and number of walks is set to $10$. In Graph-Bert\footnote{https://github.com/jwzhanggy/Graph-Bert} \cite{zhang2020graphbert}, we utilize the network structure recovery as the pre-training task since the node attributes are randomly generated in our experiments.

In the KNN graphs, which is the input for NetMF, LE and \sysname{}, the number of nearest neighbors $K$ is set to 1000. The regularization coefficient $\alpha$ in \sysname{} is set to $0.5$.

\section{Reproducibility: Experimental Settings of Recommendation Methods}
\label{appendix:setting}
The embedding size for all neural recommendation models (i.e., NCF\footnote{https://github.com/yihong-chen/neural-collaborative-filtering} \cite{ncf}, NGCF\footnote{https://github.com/xiangwang1223/neural\_graph\_collaborative\_filtering} \cite{NGCF19}, DGCF\footnote{https://github.com/xiangwang1223/disentangled\_graph\_collaborative\_filtering} \cite{DGCF19}, HGN\footnote{https://github.com/allenjack/HGN} \cite{hgn} and our proposed \modname{}) is set to $64$. Other hyperparameters like learning rate and the number of training steps are tuned by grid search on the validation set. The initial learning rate is selected from the range $[0.0001, 0.01]$ and weight decay is selected from the range $[0, 0.01]$. We use Adam \cite{kingma2014adam} for training all neural recommendation methods.

Specifically, in \modname{}, the number of recent items, $l$, is set to $5$. In Feature Network, we employ two similar residual connections. In the first residual block, we have two convolutional layers (kernel size = 3, stride = 1, output channel size = 64), activated by ReLU. In the second block, we employ convolutional layers (kernel size = 3, stride = 2, output channel size = 128) and add downsampling (convolutional layers where kernel size = 1, stride = 2) to the initial input to match the size.

The Generative Network $\mathcal{G}$ and Discriminative Network $\mathcal{S}$ are implemented with the same network structure except the final layer, which consists of three fully connected neural layers (output sizes of each layer are $[256, 256, 128]$ respectively), activated by ReLU. In $\mathcal{G}$, we use Tanh as the activation function in the final layer (output size = $64$). $\mathcal{S}$ is activated by ReLU in the final layer (output size = $1$). We set both margins in Eq. \ref{loss:discriminative} and Eq. \ref{loss:generative} as 0, that is, $m_S = m_G =0$. We use the sum of $\mathcal{L}_S$ and $\mathcal{L}_G$ for training and select the Discriminative Network for evaluation in Table \ref{exp:all1} and Table \ref{exp:all2}.

Other traditional recommendation methods (i.e., TopPop, UserKNN \cite{userknn}, ItemKNN \cite{itemknn}, SLIM \cite{slim})\footnote{https://github.com/MaurizioFD/RecSys2019\_DeepLearning\_Evaluation} are implemented following the work of \cite{recsys19}.

\section{Reproducibility: Experimental Settings of Different Architecture Designs}
In Section \ref{sec:difarch}, which explores the effect of residual connections in Feature Network in \modname{}, the settings of different architectures are as follows. ``No Conv'' replace the residual connections with four fully connected layers, where the output sizes for each layer are[256, 256, 256, 256]. ``Only Conv'' includes simple convolutional layers and other parameters are the same as the ``ResNet'', which is described in Appendix \ref{appendix:setting}. ``RevNet'' replaces previous residual connections with modified ones. It first splits the input $x$ channel-wise into two equal parts $x_1$ and $x_2$. The output is then the concatenation of $y_1 = x_1 + F(x_2)$ and $y2 = x2 + G(y_1)$. Both $F(\cdot)$ and $G(\cdot)$ includes two convolutional layers, activated by ReLU. 
We add additional one convolutional layer (kernel size = 3, stride = 2, output channel size = 128) between the two modified residual blocks and other parameters are the same to the ``ResNet'' architecture. 

\section{Reproducibility: Experimental Settings of Hyperparameter Study}
In Section \ref{sec:sensitivity}, which study the sensitivity of hyperparameters in building \sysname{} initialization, we keep other parameters unchanged. In other words, we set nearest neighbor $K=1000$ in Figure \ref{fig:alpha} and set regularization coefficient $\alpha=0.5$ in Figure \ref{fig:neighbor}.

\end{document}